\newtheorem{theorem}{Theorem}[section]
\newtheorem{lemma}[theorem]{Lemma}
\def\BibTeX{{\rm B\kern-.05em{\sc i\kern-.025em b}\kern-.08em
    T\kern-.1667em\lower.7ex\hbox{E}\kern-.125emX}}
\begin{document}

\newcommand{\squishlist}{
    \begin{list}{$\bullet$}
        { \setlength{\itemsep}{0pt}      \setlength{\parsep}{0pt}
            \setlength{\topsep}{0.5pt}       \setlength{\partopsep}{0pt}
            \setlength{\listparindent}{-2pt}
            \setlength{\itemindent}{-5pt}
            \setlength{\leftmargin}{0.5em} \setlength{\labelwidth}{0em}
            \setlength{\labelsep}{0.2em} } }

\newcommand{\squishend}{
\end{list}  }
\newcommand\rebutt[1]{\textcolor{blue}{#1}}
\newtheorem{defi}{Definition}
\newcommand{\jl}[1]{{\textsf{{\color{purple}{Ji: }   #1 }}}}

\title{Non-Clifford Fusion: T-Gate Optimization for Quantum Simulation}

\author{\IEEEauthorblockN{Yingheng Li\IEEEauthorrefmark{1}$^,$\IEEEauthorrefmark{2}, Xulong Tang\IEEEauthorrefmark{2}, Paul Hovland\IEEEauthorrefmark{1}, Ji Liu\IEEEauthorrefmark{1}}
\IEEEauthorblockA{\IEEEauthorrefmark{1} Mathematics and Computer Science Division, Argonne National Laboratory}
\IEEEauthorblockA{\IEEEauthorrefmark{2} Department of Computer Science, University of Pittsburgh}
}

\maketitle

\begin{abstract}
Hamiltonian simulation is a key quantum algorithm for modeling complex systems. To implement a Hamiltonian simulation, it is typically decomposed into a list of Pauli strings, each corresponds to a $R_Z$ rotation gate with many Clifford gates. These RZ gates are generally synthesized into a sequence of Clifford and T gates in fault-tolerant quantum computers, where the T-gate count and T-gate depth are critical metrics for such systems. In this paper, we propose NCF, a compilation framework that reduces both the T-gate count and T-gate depth for Hamiltonian simulation. NCF partitions Pauli strings into groups, where each group can be conjugated (i.e., transformed) into a list of Pauli strings that apply quantum gates on a restricted subset of qubits, allowing for simultaneous synthesis of the whole group and reducing both T-gate count and depth. Experimental results demonstrate that NCF achieves an average reduction of 57.4\%, 49.1\%, and 49.0\% in T-gate count, T-gate depth, and Clifford count, respectively, compared to the state-of-the-art method.

\end{abstract}


\section{Introduction}

Among various quantum algorithms, Hamiltonian simulation is a fundamental approach for modeling complex systems such as quantum chemistry~\cite{poulin2014trotter} and many-body physics~\cite{jiang2018quantum, georgescu2014quantum}.
Given a Hamiltonian $H$ of the target system and an evolution time $t$, the goal is to implement the operator $U = e^{-iHt}$. Since directly realizing $e^{-iHt}$ as a quantum circuit is challenging, the operator is typically decomposed into a sum of simpler Hamiltonians~\cite{lloyd1996universal} using Trotterization~\cite{trotter1959product}, where each simpler Hamiltonian corresponds to a Pauli string. Each Pauli string is then decomposed into a sequence of quantum gates, and repeatedly executing these gate sequences enables the simulation of the target system.

In general, implementing each Pauli string requires an RZ rotation gate along with a list of Clifford gates. On most NISQ quantum computers, the RZ gate is natively supported and can be executed virtually~\cite{ibm-virtual-rz} or with relatively low error rates, whereas two-qubit Clifford gates typically suffer from higher error rates~\cite{ibm_quantum, chen-2024-ionq-gate-times, Rigetti}. Consequently, current compilation frameworks for Hamiltonian simulation primarily focus on reducing the number of two-qubit Clifford gates, such as CNOT gates~\cite{debrugière2024faster, li2022paulihedral, liu2025quclear}.  
However, these compilation frameworks cannot be directly applied to fault-tolerant quantum computers, as the optimization objectives differ. In a fault-tolerant setting that leverages quantum error-correcting codes, each rotation gate, such as an RZ gate, is typically synthesized into a sequence of Clifford and T gates~\cite{ross2014optimal, paradis2024synthetiq, hao2025reducing, litinski2019game, fowler2012surface, kliuchnikov2022shorter}. Clifford gates can usually be implemented with relatively low overhead, requiring only a one or two error-correction cycle and a small number of ancilla qubits~\cite{fowler2018low, litinski2019game}. In contrast, T gates rely on magic state distillation, a resource-intensive process that is at least one magnitude more expensive in both time and qubit count per high-fidelity magic state than Clifford gates~\cite{bravyi2005universal, litinski2019game, fowler2018low, bravyi2012magic, beverland2021cost}. Therefore, \textbf{T-gate count} and \textbf{T-gate depth} emerge as the key optimization metrics for fault-tolerant quantum computers, rather than the Clifford count (e.g., CNOT gates).

In general, to implement a Hamiltonian simulation algorithm, the RZ gate in each Pauli string is synthesized using a dedicated RZ gate synthesizer such as Gridsyn~\cite{ross2014optimal}. In addition to RZ synthesizers, arbitrary single-qubit rotation (i.e., U3) synthesizers such as Trasyn~\cite{hao2025reducing} and multi-qubit unitary synthesizers such as Synthetiq~\cite{paradis2024synthetiq} can be used to synthesize non-Clifford gates. Since the number of T gates synthesized by an RZ synthesizer is similar to that of U3 and multi-qubit unitary synthesizers~\cite{hao2025reducing,kliuchnikov2022shorter}, one can reduce both T-gate count and T-gate depth by merging multiple RZ gates into a U3 gate or a two-qubit unitary and then synthesizing the merged unitary. However, to the best of our knowledge, no existing compilation framework has been proposed to optimize Hamiltonian simulation through such merging of RZ gates into larger unitaries.

In this paper, we present Non-Clifford Fusion (NCF), a compilation framework designed to reduce both T-gate count and T-gate depth by fusing non-Clifford rotations into compact single- or two-qubit unitaries. The framework overview of NCF is shown in Fig.~\ref{fig:framework}. NCF adopts a two-stage design: (i) it first partitions multiple Pauli strings into a group such that they can be simultaneously conjugated (i.e., transformed by a Clifford circuit) into a new set of Pauli strings that apply quantum gates only on one or two qubits, and (ii) it then generates the Clifford circuit along with the conjugated Pauli strings. By applying the Clifford circuit followed by the conjugated Pauli strings, NCF produces a quantum circuit that implements the operator $U = e^{-iHt}$ of the Hamiltonian simulation. Within each group, a single-qubit or two-qubit synthesizer is then employed to simultaneously synthesize the conjugated Pauli strings into Clifford and T gates. This group-wise synthesis strategy allows NCF to capture optimization opportunities that existing frameworks overlook, leading to significant reductions in both T-gate count and T-gate depth.


Our contributions are as follows:
\squishlist{}
    \item We explore the opportunity to reduce both T-gate count and T-gate depth in Hamiltonian simulation by fusing non-Clifford rotations into single-qubit or two-qubit unitaries, and synthesizing each fused unitary rather than synthesizing individual RZ gates. Based on this opportunity, we propose NCF, a compilation framework designed to reduce both T-gate count and T-gate depth for Hamiltonian simulation in the fault-tolerant era.
    \item NCF can serve as a generalized framework for evaluating different circuit synthesis tools across varying unitary sizes. We evaluated NCF with multiple synthesizers, revealing the advantages and limitations of existing synthesis methods.
    \item We evaluate NCF on 13 representative Hamiltonian simulation benchmarks. On average, single-qubit NCF with arbitrary U3 gate synthesizer Trasyn~\cite{hao2025reducing} achieves reductions of 57.4\% in T-gate count, 49.1\% in T-gate depth, and 49.0\% in Clifford count compared to the state-of-the-art framework Rustiq~\cite{debrugière2024faster} with Trasyn.
    
\squishend{}

\begin{figure}
    \centering
   \includegraphics[width=1\linewidth]{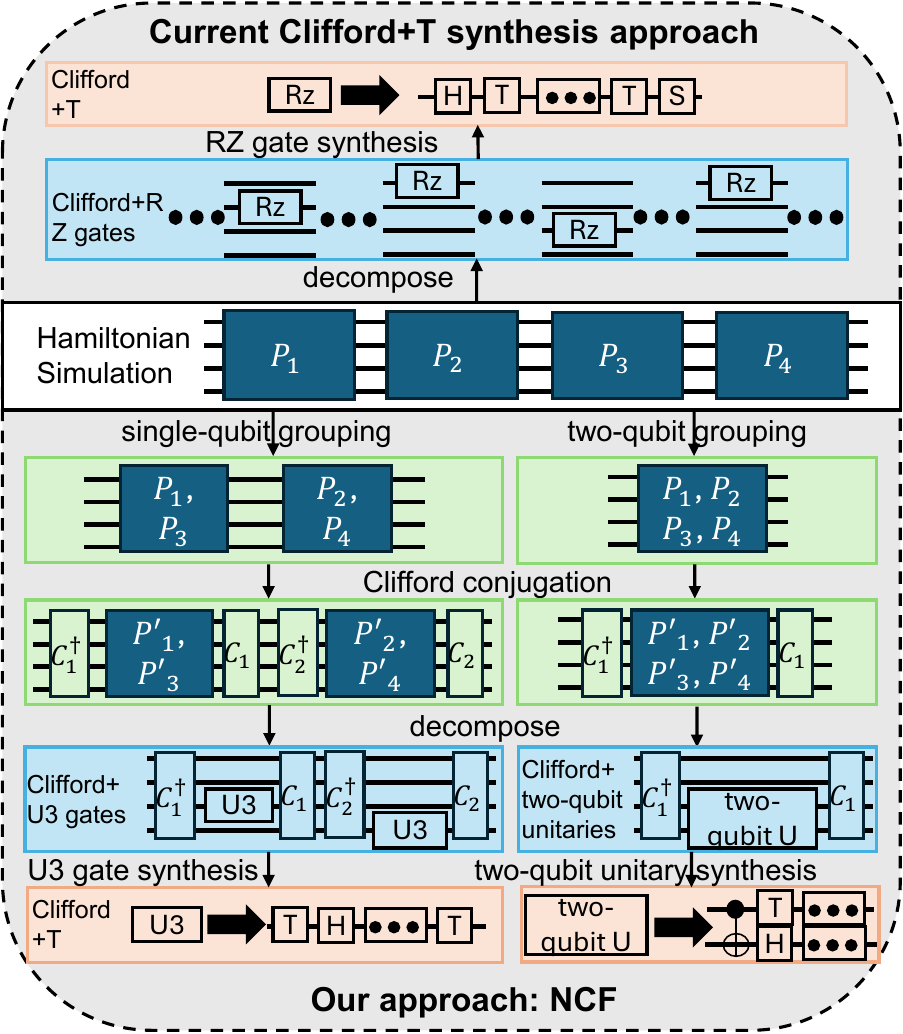}
     \caption{Overviews of NCF and the baseline.}
    \label{fig:framework}
 \end{figure}

\section{Background}
\subsection{Hamiltonian Simulation}
Hamiltonian simulation is a promising approach for modeling complex problems in physics~\cite{jiang2018quantum} and chemistry~\cite{poulin2014trotter}. 
Its goal is to capture the time evolution of a quantum system governed by a Hamiltonian $H$, represented by the unitary time-evolution operator $e^{-iHt}$ on a quantum circuit, where $t \in \mathbb{R}$ denotes the evolution time. By using the Trotter decomposition~\cite{trotter1959product}, this operator can be approximated as $e^{-i H t} \approx \left( \prod_{j=1}^m e^{-i w_j P_j \Delta t} \right)^{\frac{t}{\Delta t}}$, where the Hamiltonian $H$ is decomposed into a sum of simpler terms $H = \sum_{j=1}^m w_j P_j$ and $\Delta t$ is the timestep to control the precision of the approximation~\cite{lloyd1996universal}. Here, $w_j \in \mathbb{R}$ is the coefficient of the $j_{th}$ term and $P_j$ is a Pauli String. Fig.~\ref{fig:hamil} shows an example of a Hamiltonian simulation and its decomposition.

\begin{figure}
    \centering
   \includegraphics[width=1\linewidth]{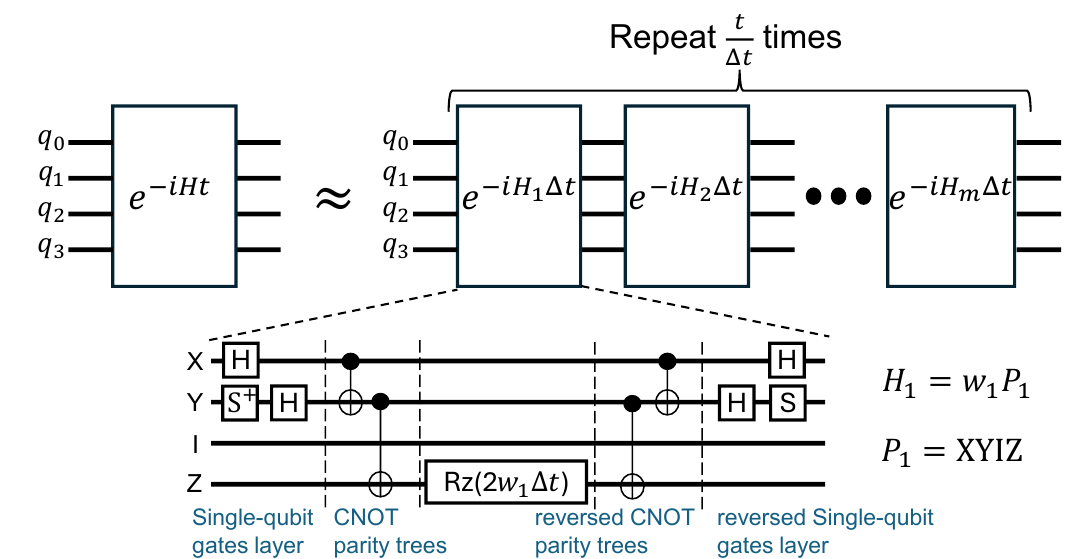}
     \caption{An example of Hamiltonian simulation and Pauli String circuit.}
    \label{fig:hamil}
 \end{figure}

\subsection{Pauli String}
\label{sec:pauli}
After decomposing the time-evolution operator into a series of exponential terms, each term $e^{-i w_j P_j \Delta t}$ can be implemented as a sequence of quantum gates~\cite{li2022paulihedral, liu2025quclear, mukhopadhyay2023synthesizing}. In an $n$-qubit system, each Pauli String $P_j=\sigma_{0} \otimes \sigma_{1} \otimes...\otimes \sigma_{n-1}$ is a tensor product of $n$ single-qubit operators, where each $\sigma_i \in \{X,Y,Z,I\}$. Here, $I$ denotes the identity operator, while $X,Y,$ and $Z$ are the Pauli operators. In the remainder of this paper, we omit the tensor product symbol $\otimes$ when writing Pauli strings. We define ``acting trivially'' on a qubit if the corresponding operator in a Pauli string is $I$, meaning no quantum gates need to be applied to that qubit. Conversely, a qubit ``acts non-trivially'' if the operator is non-$I$ (i.e., $X$, $Y$, and $Z$), indicating that quantum gates must be applied.

An example of a building-block circuit for implementing $e^{-i w_1 P_1 \Delta t}$ is shown in Fig.~\ref{fig:hamil}, where the Pauli string is $P_1=XYIZ$. It begins with a layer of single-qubit gates that map all $X$ and $Y$ operators to the $Z$ operators (i.e., applying an $H$ gate for $X \rightarrow Z$ and $S^{\dagger}H$ for $Y \rightarrow Z$). A chain of CNOT gates is then applied to form a parity aggregation tree that entangles all qubits on which $P_j$ acts non-trivially. Next, we perform a single-qubit rotation $RZ(2w_{j}\Delta t)$ on the target qubit of the CNOT tree to implement the phase evolution. Finally, the CNOT tree and basis-change gates are reversed to restore the original basis. By applying this procedure to all terms in the Trotter step, we obtain an approximation of the Hamiltonian simulation.

\subsection{Tableau Representation}

For a Hamiltonian simulation involving $m$ Pauli strings and $n$ qubits, the collection of operators can be compactly represented as an $m \times (2n+1)$ binary matrix, known as the tableau representation~\cite{van2021simple, debrugière2024faster, aaronson2004improved}. Each row of the tableau corresponds to a Pauli string, while the first $2n$ columns encode its action on the $n$ qubits: the first $n$ columns encode the $X$ operators while the next $n$ columns encode the $Z$ operators, which are referred to as the X matrix and Z matrix, respectively. The last column $r$ is used to track the sign of the Pauli string (i.e., + or -). An $Y$ operator is represented by having both the $X$ and $Z$ entries set to 1 for the same qubit. An example of the tableau of two Pauli strings are shown in Fig.~\ref{fig:pauli}(a). Thus, a Hamiltonian simulation circuit can be represented by its tableau representation together with the associated rotation angle for each Pauli string.

\subsection{Commutation and Anticommutation}
\label{sec:commute}
Pauli strings obey commutation relations based on the commutation rules of single-qubit Pauli operators~\cite{gokhale2019minimizing, van2021simple}. Each Pauli operator commutes with itself and with the identity operator $I$ (e.g., $X$ commutes with both $X$ and $I$), while it anticommutes with the other two non-identity Pauli operators (e.g., $X$ anticommutes with $Z$ and $Y$). Extending this to multi-qubit Pauli strings, two Pauli strings either commute or anticommute depending on the number of qubits where the corresponding Pauli operators anticommute. If this number is even, the strings commute; if odd, they anticommute.

For example, consider three Pauli strings on two qubits: $P_1=XZ$, $P_2=YI$, and $P_3=YX$. $P_1$ anticommutes with $P_2$ because their Pauli operator on the first qubit anticommute (i.e, $X$ and $Y$) and that on the second qubit commute (i.e., $Z$ and $I$). Since there is only one qubit position where they anticommute, the two Pauli strings anticommute. In contrast, $P_1$ commutes with $P_3$ becasue the Pauli operators on both qubits anticommute. Because the number of anticommutes is even, the two Pauli strings commute. In Hamiltonian simulation via Trotterization, the order of Pauli strings can still be exchanged regardless of commutation or anticommutation~\cite{li2022paulihedral, debrugière2024faster}. The effect of reordering impacts only the approximation error in Trotterization, which can be mitigated by decreasing the time step $\Delta t$~\cite{trotter1959product, georgescu2014quantum}.

\begin{figure}
    \centering
   \includegraphics[width=1\linewidth]{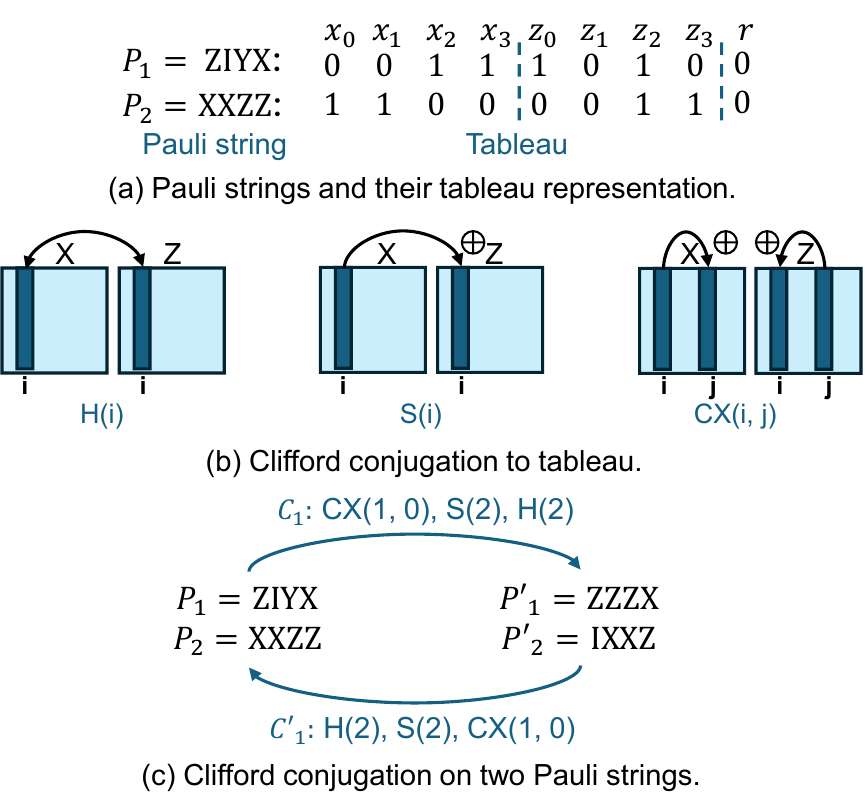}
     \caption{Examples of Tableau representation and Clifford conjugation.}
    \label{fig:pauli}
 \end{figure}

\begin{table}[h]
\centering
\caption{Pauli operator conjugation rules under Clifford gates. }
\begin{tabular}{c|c|c}
\hline
\textbf{Gate} & \textbf{Input Pauli} & \textbf{Output Pauli} \\
\hline
\multirow{3}{*}{Hadamard ($H$)} 
& $X$ & $Z$ \\
& $Y$ & $-Y$ \\
& $Z$ & $X$ \\
\hline
\multirow{3}{*}{Phase ($S$)} 
& $X$ & $Y$ \\
& $Y$ & $-X$ \\
& $Z$ & $Z$ \\
\hline
\multirow{6}{*}{CNOT ($\mathrm{CNOT}_{c \rightarrow t}$)} 
& $X \otimes I$ & $X \otimes X$ \\
& $Y \otimes I$ & $Y \otimes X$ \\
& $Z \otimes I$ & $Z \otimes I$ \\
& $I \otimes X$ & $I \otimes X$ \\
& $I \otimes Y$ & $Z \otimes Y$ \\
& $I \otimes Z$ & $Z \otimes Z$ \\
\hline
\end{tabular}
\label{tab:clifford_conjugation}
\end{table}

\subsection{Generator and Generated Pauli Strings}
Another important aspect of Pauli strings is that a set of Pauli strings can be partitioned into generators and generated elements by using Gaussian elimination~\cite{gokhale2019minimizing, aaronson2004improved}. The generators form a subset of Pauli strings such that all other Pauli strings in the set can be obtained by multiplying subsets of these generators. For example, if $\{P_1, P_2, \ldots, P_k\}$ are generators, then any other Pauli string $Q$ in the set can be written as 
    $Q = \prod_{i \in S} P_i$,
for some subset $S \subseteq \{1,2,\ldots,k\}$. 

\subsection{Clifford Circuit Conjugation}
\label{sec:conju}
A Pauli string $P$ can be transformed to another Pauli string $P'$ under conjugation by a Clifford circuit $C$ (i.e., $C P C^\dagger = P'$)~\cite{van2021simple, debrugière2024faster, gokhale2019minimizing}. A Clifford circuit is composed solely of gates from the Clifford group (i.e., H, S, and CNOT gates). We summarize the conjugation rule for each Clifford gate in Table~\ref{tab:clifford_conjugation}. Additionally, since Pauli strings can be represented by a tableau, the conjugation of a Clifford gate can be represented as an update to the corresponding tableau of Pauli strings, as shown in Fig.~\ref{fig:pauli}(b). Specifically, they are:

\squishlist{}
    \item \textbf{H gate}: Conjugation by an $H$ gate swaps $X$ and $Z$ on the acted qubit while flipping the sign of $Y$. 
    In the tableau, this corresponds to swapping the $x_i$ and $z_i$ columns for qubit $i$. 

    \item \textbf{S gate}: Conjugation by an $S$ gate maps $X \mapsto Y$, $Y \mapsto -X$, and leaves $Z$ unchanged. 
    In the tableau, this corresponds to updating the $z_i$ column as $z_i \leftarrow z_i \oplus x_i$ (with $\oplus$ denoting XOR). 

    \item \textbf{CNOT gate}: Conjugation by a $\mathrm{CNOT}_{c \to t}$ propagates $X$ on the control qubit to the target and $Z$ on the target qubit to the control, while leaving $X$ on the target and $Z$ on the control unchanged. 
    In the tableau, this corresponds to the update rules:
    \[
    x_t \leftarrow x_t \oplus x_c, \qquad z_c \leftarrow z_c \oplus z_t.
    \]
    Since $Y$ is the product of $X$ and $Z$, its transformation follows directly from these update rules.
\squishend{}

There are four important rules in Clifford conjugation. 

\textbf{Rule-I:} multiple Pauli strings can be conjugated simultaneously. That is because unitary conjugation preserves operator multiplication; that is, for a Clifford circuit $C$, and Pauli Strings $P_1$ and $P_2$, 

\begin{equation}
C \,(P_1 P_2)\, C^\dagger \;=\; (C P_1 C^\dagger) \,(C P_2 C^\dagger)
\label{eq:clifford_conjugation_product}
\end{equation}
so the same conjugation layer applied once transforms every Pauli string in the product individually. When representing a list of Pauli strings using a tableau, one or more columns of the tableau are updated for each gate in the Clifford circuit. 

\textbf{Rule-II:} Pauli strings can always be transformed back to their original form by applying the inverse Clifford circuit. Since Clifford conjugation is a unitary operation, it is reversible: for a Clifford circuit $C$, a Pauli string $P$, and the conjugated Pauli string $P'$, we have  
\[
C P C^\dagger = P', \qquad C^\dagger P' C = P,
\]  
An example of Clifford conjugation on two Pauli Strings and then transform back to the original form is shown in Fig.~\ref{fig:pauli}(c). Therefore, applying the conjugated Pauli string with the Clifford circuit realizes the same operation as the original Pauli string.

\textbf{Rule-III:} The commutation relations between Pauli strings are preserved under Clifford conjugation. Specifically, if two Pauli strings $P_1$ and $P_2$ commutes, then their conjugated Pauli strings under the same Clifford circuit $C$, denoted as $P'_1=CP_1 C^\dagger$ and $P'_2=CP_2 C^\dagger$, also commutes. 

\textbf{Rule-IV:} When applying the same Clifford circuit to conjugate a list of different Pauli strings, the resulting conjugated Pauli strings will also be distinct (i.e., no two Pauli strings will be mapped to the same conjugated Pauli string).

\section{Motivation and Opportunity}
\label{sec:motivation}
In fault-tolerant quantum computers, non-Clifford Unitaries—such as RZ gates and arbitrary single-qubit rotation gates (i.e., U3 gate) are typically decomposed into sequences of Clifford and T gates~\cite{ross2014optimal, hao2025reducing, paradis2024synthetiq, litinski2019game, kliuchnikov2022shorter, fowler2012surface}. Clifford gates can often be implemented transversally, requiring only a single or two code cycles (i.e., the time to perform error correction on all physical qubits) and minimal ancilla overhead~\cite{fowler2012surface, litinski2019game}. In contrast, T gates cannot be implemented transversally in most codes and instead rely on magic state distillation~\cite{fowler2012surface, bravyi2005universal}, a resource-intensive process that is at least one magnitude more expensive in both time and physical qubit count than a Clifford gate~\cite{beverland2021cost, litinski2019game}, making \textbf{T-gate count} and \textbf{T-gate depth} dominant metrics in fault-tolerant circuit optimization. Due to the large number of physical qubits required, the availability of Magic State Factories (MSFs) is limited, making the T-gate count a primary bottleneck~\cite{litinski2019game, fowler2012surface, bravyi2005universal, beverland2021cost}. Although T gates can, in principle, be executed in parallel, the restricted number of MSFs and the cycles required to distill each T state significantly constrain this parallelism. As fault-tolerant architectures advance, with faster T-gate production and larger numbers of MSFs, the degree of parallelism will increase, and the performance bottleneck could shift from T-gate count to T-gate depth~\cite{litinski2019game}.

The decomposition of non-Clifford unitaries requires a synthesizer that produces a sequence of Clifford and T gates approximating the target unitary to within a specified precision $\epsilon$. For single-qubit rotations, the state-of-the-art synthesizer for RZ gates is Gridsynth~\cite{ross2014optimal}, which produces a series of quantum gates with T-gate count $3\log_2({\frac{1}{\epsilon}})$ (i.e., the lower the $\epsilon$, the more T gates required). For arbitrary single-qubit unitaries such as the U3 gate, Trasyn~\cite{hao2025reducing} achieves a comparable T-gate cost to Gridsynth. Beyond single-qubit unitaries, methods have also been developed for synthesizing multi-qubit unitaries. The state-of-the-art multi-qubit Synthesizer is Synthetiq~\cite{paradis2024synthetiq}, which is capable of synthesizing unitaries with up to four qubits. As shown in \cite{kliuchnikov2022shorter}, the two-qubit synthesizer can achieve a T-gate count of $11.5 \log_2\!\left(\frac{1}{\epsilon}\right)$, which is comparable to the T-gate count of the single-qubit synthesizer. Note that the T-gate count produced by these synthesizers depends solely on the error precision $\epsilon$, and is independent of the specific unitary being synthesized~\cite{ross2014optimal,kliuchnikov2022shorter,hao2025reducing}.

Using the U3 gate synthesizer, one way to reduce both the T-gate count and T-gate depth is to merge a sequence of consecutive single-qubit gates into a single U3 gate and synthesize each U3 gate, rather than synthesizing each RZ gate independently. This reduction arises from two factors: (i) Since the T-gate count scaling of the RZ and U3 gate synthesizers is similar, combining $n$ RZ gates into a single U3 gate reduces the number of T gates to approximately $\frac{1}{n}$ of that required when synthesizing each RZ gate individually. (ii) Synthesizing each RZ gate with an error threshold $\epsilon$ introduces a cumulative error of $n\epsilon$. By merging $n$ RZ gates, the error threshold for synthesis can be relaxed from $\epsilon$ to $n\epsilon$ (i.e., a higher error threshold), further reducing both the T-gate count and T-gate depth for the same total error. Extending this idea, multi-qubit synthesizers enable the merging of more RZ gates across multiple qubits into a multi-qubit unitary. 

However, to the best of our knowledge, no existing work specifically targets the merging of single-qubit rotation gates to reduce both the T-gate count and T-gate depth. In the current Hamiltonian simulation implementation~\cite{liu2025quclear, li2022paulihedral}, the number of RZ gates is directly proportional to the number of Pauli strings, as illustrated in Fig.~\ref{fig:hamil}. This direct correspondence results in both high T-gate count and substantial T-gate depth.

The only method that enables the merging of RZ gates is Rustiq~\cite{debrugière2024faster}, a state-of-the-art Hamiltonian simulation compiler designed primarily to group multiple Pauli strings and use Clifford circuit conjugation in order to reduce the number of CNOT gates. This grouping strategy accidentally allows some RZ gates to be merged into U3 gates, thereby lowering the T-gate count. It also enables parallel execution of RZ rotations to reduce the T-gate depth. However, the reduction of T-gate resources is merely a byproduct of the CNOT optimization, rather than a targeted objective of the approach. 

Merging RZ gates in Hamiltonian simulation is not a trivial task. As discussed in Section~\ref{sec:pauli}, each RZ gate is sandwiched between two CNOT trees, which prevents merging unless two or more Pauli strings act non-trivially on the same qubits. Furthermore, determining which Pauli strings to merge in order to minimize both the T-gate count and T-gate depth is challenging, due to the large number of possible merging combinations.
 This motivates the question: \textit{How can we merge the RZ gates for Hamiltonian simulation to reduce both the T-gate count and T-gate depth?}

\section{Our Method}
\label{sec:method}
In this paper, we propose Non-Clifford Fusion (NCF), a compilation framework designed to reduce both the T-gate count and T-gate depth in Hamiltonian simulation. NCF adopts a two-stage design. In the first stage, the input set of Pauli strings is partitioned into groups such that the Pauli strings within each group can be conjugated into a new set acting non-trivially on only one or two qubits. In the second stage, we construct Clifford circuits that conjugate the Pauli strings in each group, yielding the conjugated Pauli strings. Applying these conjugated Pauli strings together with the Clifford circuits reproduces the same functionality as the original Pauli strings. Within each group, the RZ gates corresponding to the conjugated Pauli strings can then be merged into a single unitary, which is synthesized using either a U3 gate synthesizer or a two-qubit unitary synthesizer.


To control the complexity of NCF, we introduce a window strategy that restricts the number of Pauli strings considered for inclusion in a group, which limits the search space while still capturing effective merging opportunities.

\subsection{NCF Grouping}
\label{sec:grouping}
In the first stage of NCF, Pauli strings are partitioned into groups such that the RZ gates within each group can be merged into either a single-qubit or a two-qubit unitary. Although merging into higher-qubit unitaries is possible, the scalability limitations of the synthesizers make the efficient synthesis of unitaries involving more than two qubits impractical~\cite{hao2025reducing, paradis2024synthetiq}. Moreover, the T-gate count scaling of higher-qubit synthesizers has not yet been established~\cite{kliuchnikov2022shorter}, leaving the potential benefits of using such synthesizers uncertain. Therefore, NCF focuses on single-qubit and two-qubit unitaries.

\subsubsection{Single-qubit Grouping}
As discussed in \cite{van2021simple}, any pair of anticommuting Pauli strings can be simultaneously conjugated into two new anticommuting Pauli strings that acts non-trivially on the same qubit, with distinct Pauli operators on that qubit (e.g., $X$ and $Z$). For example, the two Pauli strings shown in Fig.~\ref{fig:pauli}(c) can be conjugated into two Pauli strings $P'_1 = XIII$ and $P'_2=ZIII$. This follows Rule-III in Section~\ref{sec:conju}, the commutation relationship between Pauli strings is preserved under Clifford conjugation. According to the circuit construction rules for Pauli strings described in Section~\ref{sec:pauli}, no CNOT parity tree is required in this case since only one qubit is acted on non-trivially. Consequently, the single-qubit gates for both Pauli strings are applied to the same qubit, allowing them to be merged into a single unitary.


Moreover, if a Pauli string can be generated by two selected Pauli strings, all three Pauli strings can be conjugated to act non-trivially on a single qubit, enabling their merging into a single unitary operation. This is possible because: 
\begin{lemma}
Let $C$ be a Clifford circuit and let $P_1,\dots,P_m$ be Pauli strings on $n$ qubits. 
If the conjugated Pauli strings $P'_j := C P_j C^\dagger$ act non-trivially only on a fixed subset of qubits $S \subseteq \{1,\dots,n\}$ for $j \in \{1, \cdots, m \}$, 
then for any product $Q=\prod_{j=1}^m P_j^{a_j}$ with $a_j\in\{0,1\}$, the conjugated term $Q' := C Q C^\dagger$ also acts non-trivially only on $S$.
\end{lemma}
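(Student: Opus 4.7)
The plan is to exploit the multiplicative (homomorphism) property of Clifford conjugation, already recorded in the paper as Rule-I, and then verify the support claim qubit by qubit using the tensor product structure of Pauli strings.

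First, I would iterate equation~\eqref{eq:clifford_conjugation_product} to obtain the identity
\[
C\,Q\,C^\dagger \;=\; C\!\left(\prod_{j=1}^m P_j^{a_j}\right)\!C^\dagger \;=\; \prod_{j=1}^m \bigl(C P_j C^\dagger\bigr)^{a_j} \;=\; \prod_{j=1}^m (P'_j)^{a_j}.
\]
Since $P_j^0 = I$ (the all-identity string, whose conjugate is itself), the terms with $a_j = 0$ drop out and contribute nothing to the support, so it suffices to show that an arbitrary product of the $P'_j$ is supported on $S$.

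Next, I would analyze the product one qubit at a time. Write each $P'_j = \bigotimes_{q=1}^n \sigma^{(j)}_q$ with $\sigma^{(j)}_q \in \{I,X,Y,Z\}$, and recall that the multiplication of Pauli strings factorizes across qubits up to a global phase, i.e.\ $\prod_j P'_j = \alpha \bigotimes_q \bigl(\prod_j \sigma^{(j)}_q\bigr)$ for some $\alpha \in \{\pm 1, \pm i\}$. By the hypothesis, for every $q \notin S$ and every $j$, we have $\sigma^{(j)}_q = I$, hence $\prod_j \sigma^{(j)}_q = I$. Therefore, on every qubit outside $S$, the factor in the product is $I$, which by definition means $Q'$ acts trivially outside $S$, i.e.\ non-trivially only on $S$.

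I do not anticipate a serious obstacle. The one thing to be careful about is the global phase $\alpha$ picked up from single-qubit Pauli multiplications such as $XY = iZ$; since the statement concerns only the support (which qubits carry non-identity operators), this phase is irrelevant, but I would flag it explicitly so the reader does not confuse it with the sign bit $r$ of the tableau. A cleaner alternative, which I would mention as a backup, is to invoke the tableau-level formulation: the Clifford map preserves the $\mathbb{F}_2$-vector-space structure on the $(x,z)$ bits per qubit, so the $(x,z)$ bits of the product are the XOR of the $(x,z)$ bits of the factors; since these are zero outside $S$ for every factor, they remain zero outside $S$ in the sum.
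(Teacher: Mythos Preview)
Your proposal is correct and follows exactly the same approach as the paper: apply the multiplicative property of Clifford conjugation (Rule-I) to write $Q'=\prod_j (P'_j)^{a_j}$, then observe that a product of Paulis each trivial outside $S$ remains trivial outside $S$. The paper compresses your qubit-by-qubit tensor-product analysis into a single sentence and does not mention the global phase or the tableau alternative, but these are embellishments rather than a different route.
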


\begin{proof}
By Rule-I in Section~\ref{sec:conju},
\[
Q' = C \Big(\prod_{j=1}^m P_j^{a_j}\Big) C^\dagger 
   = \prod_{j=1}^m \big(C P_j C^\dagger\big)^{a_j}
   = \prod_{j=1}^m (P'_j)^{a_j}.
\]
Each $P'_j$ is trivial outside $S$, so their product is also trivial outside $S$. So $Q'$ only act non-trivially on $S$.
\end{proof}

Furthermore, at most three Pauli strings can be simultaneously conjugated to act non-trivially on the same single qubit. This follows from Rule-IV in Section~\ref{sec:conju}, which ensures that different Pauli strings are conjugated into distinct Pauli operators. Since only three non-identity Pauli operators ($X$, $Y$, and $Z$) can act on a single qubit, no more than three Pauli strings can be included in a single group.

In this stage, we search for pairs of anti-commuting Pauli strings together with the Pauli strings they generate, and partition them into a group. Our method proceeds as follows: for the set of Pauli strings in a Hamiltonian simulation, we first construct a commuting graph and an anti-commuting graph, where vertices represent Pauli strings and edges capture whether two strings commute or anticommute, respectively.

For Pauli strings that are not yet partitioned, we first perform Gaussian elimination to identify a set of generators and the corresponding generated Pauli strings. Using the anti-commuting graph, we then extract all pairs of anti-commuting generator Pauli strings. For each candidate pair, we check whether their generated Pauli string also appears in the Hamiltonian simulation and remains unpartitioned. If so, we group the first pair with their generated Pauli string; otherwise, we select the first pair of anti-commuting Pauli strings.
This process continues iteratively until either all Pauli strings are partitioned or the remaining ones cannot be grouped further (i.e., they are mutually commuting). 

We then partition the remaining ungrouped, mutually commuting Pauli strings into groups. Although this grouping does not allow their $R_Z$ gates to be merged, these Pauli strings can be conjugated to act non-trivially on distinct single qubits, enabling their $R_Z$ gates to be executed in parallel and thereby reducing the T-gate depth~\cite{van2021simple, gokhale2019minimizing}. Since each commuting Pauli string is conjugated to act on a distinct qubit, the maximum number of Pauli strings in such a group is equal to the logical qubit count $q$.

To distinguish the two types of groups, we refer to the groups whose Pauli strings can be conjugated to act on a single qubit as ``anticommuting groups,'' and the groups with mutually commuting Pauli strings as ``commuting groups.'' After all Pauli strings are partitioned into groups, we further reduce the T-gate depth by reordering the groups to allow simultaneous execution of compatible groups. Starting from the first group, we search through the remaining groups and check whether any subsequent group can be executed concurrently, and if such a combination exists, we place the groups together.


An example of the grouping process is shown in Fig.~\ref{fig:NCF}. Fig.~\ref{fig:NCF}(a) illustrates six Pauli strings, while Fig.~\ref{fig:NCF}(b) shows the default quantum circuit to implement them. For the six Pauli strings in Fig.~\ref{fig:NCF}(a), we begin by applying Gaussian elimination to distinguish the generators from the generated Pauli strings. Here, $P_1, P_2, P_3, P_4,$ and $P_6$ are the generators, while $P_5$ is generated by $P_1$ and $P_3$. We first partition $P_1, P_3,$ and $P_5$ into a group, as these can be conjugated to act on a single qubit. In the next iteration, we group $P_4$ and $P_6$ since they are the next anti-commuting pair. Finally, for the remaining $P_2$, we set it as a single group. The final groups are shown in Fig.~\ref{fig:NCF}(c). 

Using the Clifford circuit generation approach that will be introduced in Section~\ref{sec:clifford}, we conjugate the six Pauli strings into six new Pauli strings with two Clifford circuits $C_1$ and $C_2$, as shown in Fig.~\ref{fig:NCF}(d). 
After forming the groups, we reorder them so that $P'_2$ can be executed simultaneously with group 1(i.e., $P'_1$, $P'_3$, and $P'_5$). The resulting quantum circuit, shown in Fig.~\ref{fig:NCF}(e), performs the same function as the original circuit in Fig.~\ref{fig:NCF}(b). This process reduces the number of unitaries from six to three, significantly decreasing both the T-gate counts and T-gate depth.

 \begin{figure*}
    \centering
   \includegraphics[width=1\linewidth]{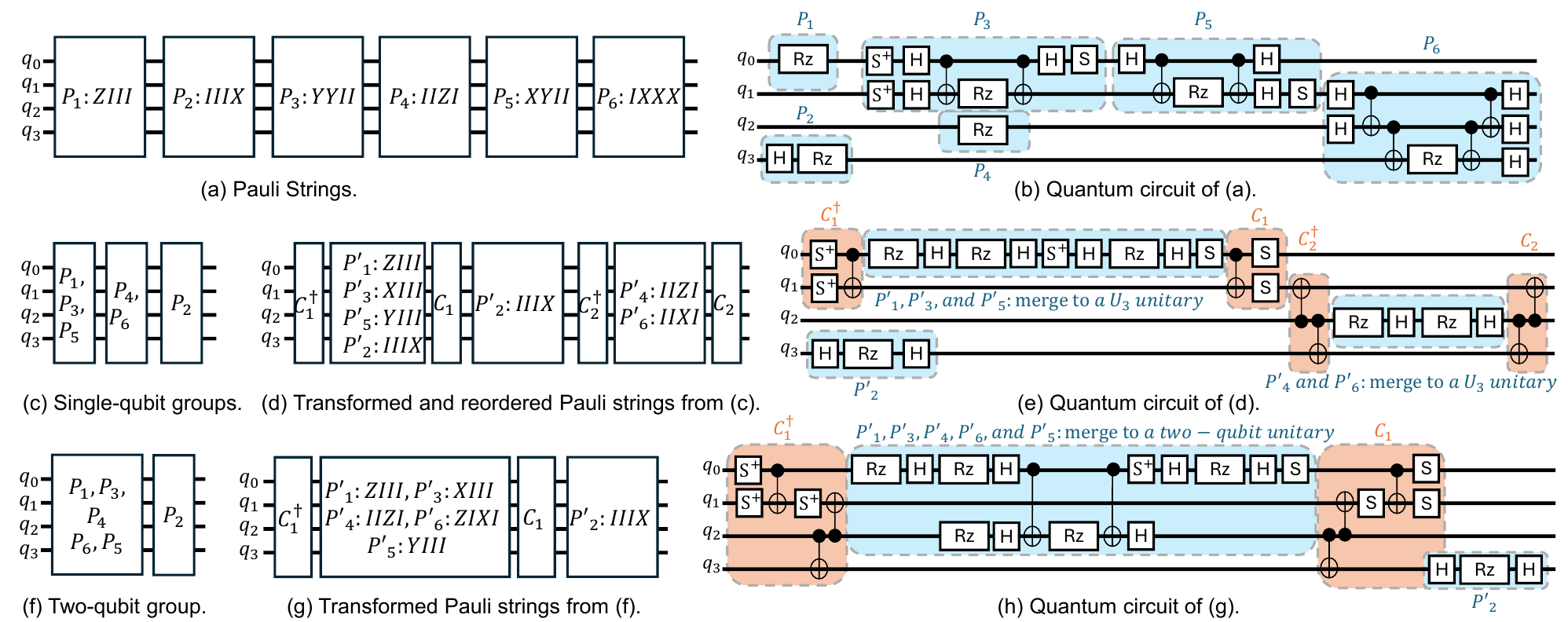}
     \caption{Examples of NCF applied to six Pauli strings for both single and two-qubit unitaries.}
    \label{fig:NCF}
 \end{figure*}

\begin{figure}
    \centering
   \includegraphics[width=1\linewidth]{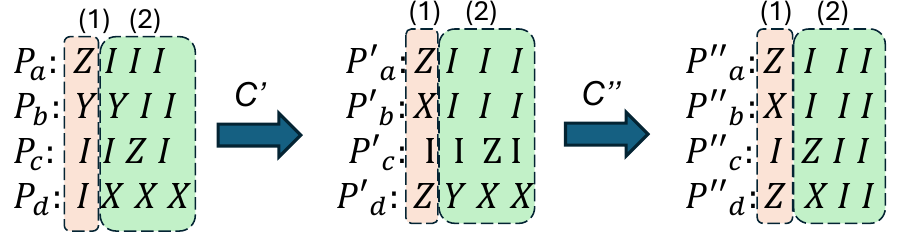}
     \caption{An example of conjugating four Pauli strings.}
    \label{fig:two}
 \end{figure}

\subsubsection{Two-qubit Grouping}
Similar to single-qubit grouping, our two-qubit grouping method searches for two pairs of generator Pauli strings and their generated Pauli strings that can be conjugated to act non-trivially on two qubits. The method begins by identifying a pair of anticommuting Pauli strings. As discussed in the single-qubit grouping, this pair can be conjugated by a Clifford circuit $C'$ to act non-trivially on a single qubit. We then search for a second pair of Pauli strings which, after conjugation by $C'$ followed by another Clifford circuit $C''$, act non-trivially on two qubits, including the qubit already involved in the first pair. In other words, the four Pauli strings along with their generated Pauli strings can be simultaneously conjugated by the combined Clifford circuit $C=C'+C''$ to act non-trivially on two qubits.

To distinguish the two pairs of Pauli strings,  we label the Pauli strings in the first anticommuting pair as $P_a$ and $P_b$, and those in the second pair as $P_c$ and $P_d$. 
For example, Fig.~\ref{fig:two} illustrates four four-qubit Pauli strings, $P_a$ to $P_d$, which correspond to $P_1$, $P_3$, $P_4$, and $P_6$ in Fig.~\ref{fig:NCF}, respectively. The four Pauli strings can be conjugated into $P'_a$ to $P'_d$ by a Clifford circuit $C'$. Here, we separate each Pauli string into two parts to better illustrate our method, where part (1) represents the qubit on which $P'_a$ and $P'_b$ act non-trivially. For instance, $P'_a(1) = Z$ and $P'_a(2) = III$ in Fig.~\ref{fig:two}. If $P'_c(2)$ and $P'_d(2)$ are also anti-commuting, they can be conjugated into $P''_c(2)$ and $P''_d(2)$ by a second Clifford circuit $C''$, where $P''_c(2)$ and $P''_d(2)$ act non-trivially on only one qubit, as illustrated in Fig.~\ref{fig:two}. After conjugating by both $C = C' + C''$, the four Pauli strings $P''_a$ to $P''_d$ act non-trivially on only two qubits. 

As introduced above, to identify the second pair of generator Pauli strings 
$P_c$ and $P_d$, one naive approach is to first construct a Clifford circuit 
$C'$ for the chosen pair $P_a$ and $P_b$, apply it to all remaining generator 
Pauli strings, and then analyze the commutation relations among the transformed 
Pauli strings to locate the appropriate $P_c$ and $P_d$. This procedure, however, 
incurs high computational complexity. However, we found that the second pair of Pauli strings can be identified solely by analyzing the commutation relations among $P_a$ through $P_d$, without explicitly generating and applying $C'$, and that this analysis needs to be performed only once, thereby reducing the complexity.
Based on the commutation checks, a truth table can be used to efficiently locate $P_a$ to $P_d$. The reason is:

Since each Pauli string is separated into two parts, (1) and (2), two Pauli strings anticommute if exactly one of the two parts anticommutes. 
Using this property, to determine whether $P'_c(2)$ and $P'_d(2)$ anti-commute, we exam the commutation relation between (i) $P'_c(1)$ and $P'_d(1)$, and (ii) $P'_c$ and $P'_d$. Specifically, $P'_c(2)$ and $P'_d(2)$ anti-commute if either $P'_c$ and $P'_d$ anti-commute while $P'_c(1)$ and $P'_d(1)$ commute, or $P'_c$ and $P'_d$ commute while $P'_c(1)$ and $P'_d(1)$ anti-commute. This relationship is summarized in the truth table shown in Table~\ref{tab:pauli_comm_full}. Based on Rule-III in Section~\ref{sec:conju}, the commutation relation between Pauli strings is preserved when conjugated by the same Clifford circuit. Therefore, the commutation relation between $P'_c$ and $P'_d$ can be determined directly by analyzing between $P_c$ and $P_d$.

\begin{table}[h]
\centering
\caption{Commutation conditions for $P'_c(2)$ and $P'_d(2)$}
\begin{tabular}{c|c|c}
\hline
$P'_c$($P_c$) vs $P'_d$($P_d$) & $P'_c(1)$ vs $P'_d(1)$ & $P'_c(2)$ vs $P'_d(2)$ \\
\hline
\textbf{Anti-commute} & \textbf{Commute}       & \textbf{Anti-commute} \\
 \textbf{Commute}     & \textbf{Anti-commute}  & \textbf{Anti-commute} \\
Anti-commute & Anti-commute  & Commute \\
Commute     & Commute       & Commute \\
\hline
\end{tabular}
\label{tab:pauli_comm_full}
\end{table}

Next, we determine the commutation relationship between $P'_c(1)$ and $P'_d(1)$. Since $P'_a(2)$ and $P'_b(2)$ are $I$ operators, they commute with both $P'_c(2)$ and $P'_d(2)$. Consequently, the commutation relations between $P'_a(1)$ and $P'_c(1)$, $P'_a(1)$ and $P'_d(1)$, as well as between $P'_b(1)$ and $P'_c(1)$, $P'_b(1)$ and $P'_d(1)$, can be fully determined by the original Pauli strings $P_a$ to $P_d$. We observed that this information is sufficient to establish the commutation relation between $P'_c(1)$ and $P'_d(1)$. Specifically, there are three possible conditions:

\squishlist{}
    \item \textbf{Condition-I: If $P_c$ or $P_d$ commutes with both $P_a$ and $P_b$, then $P'_c(1)$ and $P'_d(1)$ must commute.} As discussed above, if $P_c$ (or $P_d$) commutes with both $P_a$ and $P_b$, then after conjugation, $P'_c(1)$ (or $P'_d(1)$) also commutes with both $P'_a(1)$ and $P'_b(1)$. Since $P'_a(1)$ and $P'_b(1)$ are distinct non-$I$ operators acting on the same qubit, $P'_c(1)$ (or $P'_d(1)$) must be the $I$ operator. Because the identity operator commutes with every operator, it follows that $P'_c(1)$ and $P'_d(1)$ commute. For example, in Fig.~\ref{fig:two}, since $P_c$ commutes with both $P_a$ and $P_b$, we have $P'_c(1) = I$, which therefore commutes with $P'_d(1)$.

    \item \textbf{Condition-II: $P'_c(1)$ and $P'_d(1)$ commutes if $P_c$ and $P_d$ has the same commutation relationship between $P_a$ and $P_b$.} For $P_c$ and $P_d$, if they share the same commutation relations with $P_a$ and $P_b$, then under the same Clifford conjugation, $P'_c(1)$ and $P'_d(1)$ must correspond to the same Pauli operator, and thus commute. This is because $P'_a(1)$ and $P'_b(1)$ are distinct non-$I$ operators, there is only one non-$I$ Pauli operator that satisfies the same commutation relations with two distinct non-$I$ operators~\cite{van2021simple, gokhale2019minimizing}. 

    \item \textbf{Condition-III: $P'_c(1)$ and $P'_d(1)$ anticommute if the above two conditions are not met.} $P'_c(1)$ and $P'_d(1)$ anticommute when they are distinct non-$I$ operators. Condition-I covers the case where either $P'_c(1)$ or $P'_d(1)$ is an $I$ operator, and Condition-II covers the case where $P'_c(1)$ and $P'_d(1)$ are the same non-$I$ operator. Therefore, if neither of these conditions applies, $P'_c(1)$ and $P'_d(1)$ must anticommute.
\squishend{}

Using the three conditions above along with the commutation relationship between $P_c$ and $P_d$, we can determine whether $P'_c(2)$ and $P'_d(2)$ anticommute. Consequently, the Pauli strings $P_a$ to $P_d$ can be conjugated into $P''_a$ to $P''_d$, which act non-trivially on only two qubits. This enables us to use a truth table to select the appropriate $P_a$ to $P_d$. The corresponding truth table is shown in Table~\ref{tab:truth_table_PcPd}.

\begin{table}[h]
\centering
\caption{Truth table for selecting $P_a$ to $P_d$ to enable $P''_a$ to $P''_d$ act non-trivially on two qubits. Only the allowed combination is shown. 1 indicates anticommutes}
\begin{tabular}{c|c|c|c|c}
\hline
$P_a$ vs $P_c$ & $P_b$ vs $P_c$ & $P_a$ vs $P_d$ & $P_b$ vs $P_d$ & $P_c$ vs $P_d$\\
\hline
0 & 0 & 0 & 0 & 1 \\
0 & 0 & 0 & 1 & 1 \\
0 & 0 & 1 & 0 & 1 \\
0 & 0 & 1 & 1 & 1 \\
0 & 1 & 0 & 0 & 1 \\
0 & 1 & 0 & 1 & 1 \\
0 & 1 & 1 & 0 & 0 \\
0 & 1 & 1 & 1 & 0 \\
1 & 0 & 0 & 0 & 1 \\
1 & 0 & 0 & 1 & 0 \\
1 & 0 & 1 & 0 & 1 \\
1 & 0 & 1 & 1 & 0 \\
1 & 1 & 0 & 0 & 1 \\
1 & 1 & 0 & 1 & 0 \\
1 & 1 & 1 & 0 & 0 \\
1 & 1 & 1 & 1 & 1 \\

\hline
\end{tabular}
\label{tab:truth_table_PcPd}
\end{table}

Using the truth table, we now present our grouping strategy for two-qubit grouping. Similar to the single-qubit grouping method, we first construct commuting and anti-commuting graphs for the list of Pauli strings. At each iteration, for Pauli strings that have not yet been partitioned, we perform Gaussian elimination to identify a set of generator Pauli strings and the corresponding Pauli strings they generate. 
Our goal is to partition as many Pauli strings as possible into a group, which can then be merged into a two-qubit unitary. However, there are $ \binom{m}{4} = \frac{m!}{4!(m-4)!} $ possible combinations for $m$ Pauli strings to check against the truth table in Table~\ref{tab:truth_table_PcPd}, which is computationally expensive. To reduce this complexity, we employ a grading system: we first identify a pair of $P_a$ and $P_b$ with the highest grade, and then select $P_c$ and $P_d$ based on the number of Pauli strings that can be generated.

The grading system is defined as follows. For each generated Pauli string, we first identify its corresponding generators. For instance in Fig.~\ref{fig:NCF}(a), $P_5$ is generated by $P_1$ and $P_3$. For each candidate pair of anti-commuting Pauli strings (i.e., potential $P_a$ and $P_b$), we assign a grade based on their role in generating other Pauli strings: 3 points if the pair can directly generate a Pauli string, and 1 point if one or both Pauli strings appear in the generator set of a generated Pauli string. This scoring prioritizes pairs that contribute most to generating additional Pauli strings, thus more Pauli strings can be grouped.

After selecting the candidate pair with the highest grade, we then consider all possible pairs of candidate $P_c$ and $P_d$ from the remaining generator Pauli strings. If the combination of the chosen $P_a$ and $P_b$ with a candidate $P_c$ and $P_d$ satisfies the truth table in Table~\ref{tab:truth_table_PcPd}, we include the four Pauli strings along with their generated Pauli strings as a candidate group. Among all such candidate groups, we select the one that contains the largest number of Pauli strings. Note that it is possible that none of the candidate pairs $P_c$ and $P_d$ satisfy the truth table. In this case, we select a single $P_c$ such that the three Pauli strings $P_a$ to $P_c$ generate the largest number of Pauli strings. This is because any generator Pauli string, together with the chosen $P_a$ and $P_b$, can be conjugated into three Pauli strings that act non-trivially on two qubits~\cite{van2021simple}. Similar to the single-qubit grouping, we refer to the already-formed groups as ``anticommuting groups,'' and include the remaining ungrouped mutually commuting Pauli strings into ``commuting groups'' to further reduce the T-gate depth.

Note that, with four generator Pauli strings, we can generate at most 15 distinct Pauli strings. 
The reason the maximum number is 15 is as follows: for each qubit, there are four possible Pauli operators ($X$, $Y$, $Z$, and $I$), giving a total of $4^2=16$ combinations for two qubits. Excluding the $II$ operator, which acts trivially on both qubits, we are left with at most 15 distinct Pauli strings.

Using this method, we can partition all six Pauli strings in Fig.~\ref{fig:NCF} into two groups, as illustrated in Fig.~\ref{fig:NCF}(f), where $P_1$, $P_3$, $P_4$, and $P_6$ serve as $P_a$ to $P_d$, respectively. Using the Clifford circuit generation approach in Section~\ref{sec:clifford}, the conjugated transformed Pauli strings are shown in Fig.~\ref{fig:NCF}(g), and the associated quantum circuit, annotated with the Clifford operations, is presented in Fig.~\ref{fig:NCF}(h). As seen in Fig.~\ref{fig:NCF}(h), the transformed Pauli strings $P'_1$, $P'_3$, $P'_4$, $P'_5$, and $P'_6$ act non-trivially on only two qubits (i.e., $q_0$ and $q_2$).


\subsection{Clifford Circuit Generation}
\label{sec:clifford}
In the previous section, we introduced two methods to partition multiple Pauli strings into groups. For each group, a Clifford circuit is applied to conjugate the Pauli strings, enabling them to act non-trivially on one or two qubits. Based on Rule-II in Section~\ref{sec:conju}, the conjugated Pauli strings, along with the reversed Clifford circuit, are implemented to preserve the function of the original Pauli strings. In this section, we describe how to generate the Clifford circuit and the conjugated Pauli strings for a selected group of Pauli strings.

For the generators in a group (i.e., the anticommuting pair or $P_a$ to $P_d$), we represent them using a tableau. As discussed in Section~\ref{sec:conju}, quantum gates can update the tableau column by column. We thus search for a sequence of quantum gates that transforms the tableau into a new form in which only one or two same columns of the X and Z matrices contain 1s, indicating that the Pauli strings act non-trivially on only one or two qubits. The transformed tableau then represents the conjugated Pauli strings. An example of such a transformation is shown in Fig.~\ref{fig:Clifford}.

The tableau transformation is performed row by row, starting with the first row (i.e., the first Pauli string), which is transformed to contain only a single 1 in the row. This process occurs in two stages. First, we eliminate all 1s in the Z matrix using S and H gates: we apply an H gate to swap a column in the X matrix with the corresponding column in the Z matrix when Z=1 and X=0, and we apply an S gate when both X and Z are 1 in that column. After this stage, only 1s remain in the X matrix of the first row. 

Second, we reduce the number of 1s in the X matrix to a single 1 by applying CNOT gates between columns that contain 1s, performing them in parallel to minimize circuit depth (e.g., between $x_0$ and $x_1$ and simultaneously between $x_2$ and $x_3$). When multiple CNOT combinations are possible, we select the one that minimizes the total number of 1s in the tableau, which reduces the number of Clifford gates required in subsequent iterations. After completing these two stages, only a single 1 remains in the first row of the X matrix. We then apply an H gate to transfer this 1 to the Z matrix. The column containing this 1 is referred to as the ``pivot column.''

For the second row, we follow the same procedure: H and S gates are applied to eliminate all 1s in the Z matrix, followed by CNOT gates to reduce the 1s in the X matrix to a single 1 located in the pivot column. For example, in Fig.~\ref{fig:Clifford}, the first row of the Z matrix only contains a single 1, so no operations are needed, and the pivot column is 0. In the second row, we first apply S gates to the first and second qubits to eliminate the 1s in the Z matrix. Next, we apply a CNOT gate to remove the extra 1 in the X matrix. After these operations, both 1s in the first two rows align in the same column (i.e., column 0).

The single-qubit case stops at this point. For the two-qubit case, we follow the same procedure, reducing the 1s in each row to a single column, excluding the pivot column of the first two rows. For example, in Fig.~\ref{fig:Clifford}, after applying two S gates and one CNOT gate, only one 1 remains in the Z matrix in the third row, so we set column 2 as the new pivot column and target the reduction of 1s in the fourth row. After applying an S gate and two CNOT gates, the 1s remain only in column 0, the pivot for the first two rows, and in column 2, the pivot for the last two rows. This procedure yields the Clifford circuit and the corresponding conjugated Pauli strings. As discussed in Section~\ref{sec:grouping}, any additional generated Pauli strings in the group are also conjugated by the same Clifford circuit and act non-trivially on only one or two qubits.

After generating the Clifford circuits and conjugated Pauli strings for the 
anticommuting groups, we generate those for the commuting groups. For the commuting 
groups, we apply the same strategy to transform each row (i.e., Pauli string) in the 
matrices to contain a single 1, each in a different column. In other words, each Pauli string is conjugated to act non-trivially on a different qubit. This allows the parallel execution of RZ gate rotations within the group.

\begin{figure}
    \centering
   \includegraphics[width=1\linewidth]{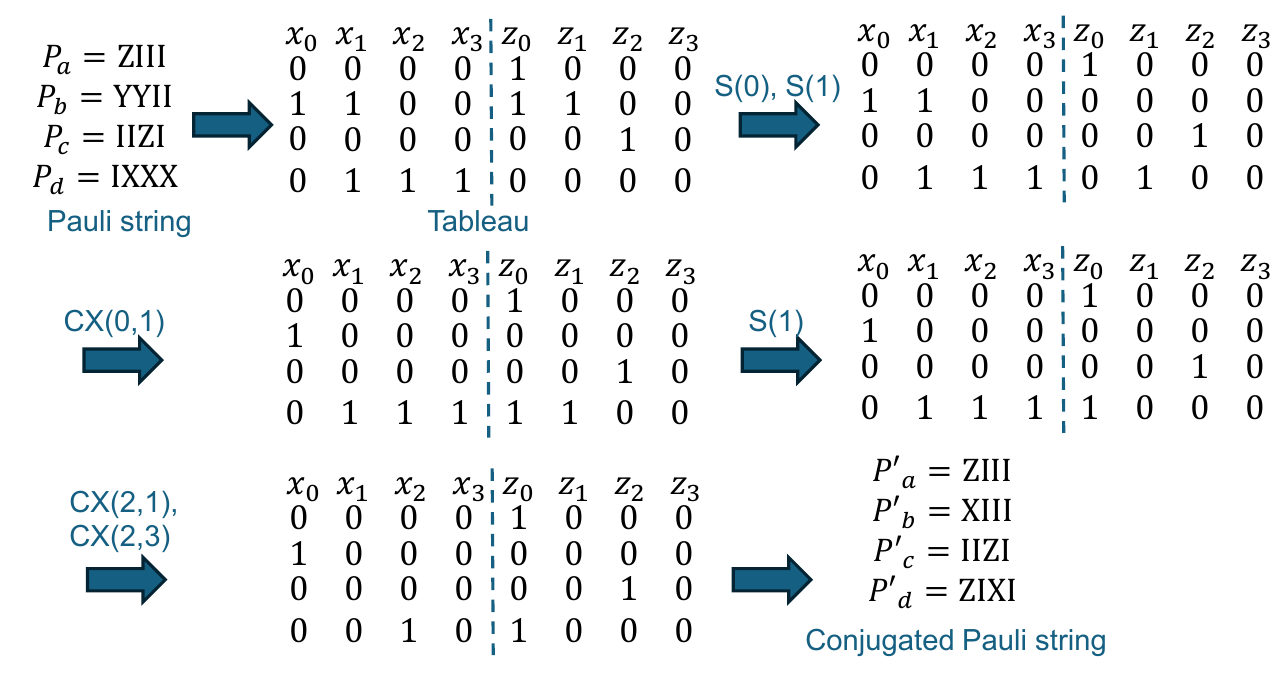}
     \caption{An example of conjugation and Clifford circuit generation.}
    \label{fig:Clifford}
 \end{figure}

\subsection{Sliding Window Strategy}
\label{sec:window}
Although we employ a heuristic algorithm, the complexity of this approach can still be high. For instance, Gaussian elimination has a complexity of $O(m^3)$, where $m$ is the number of Pauli strings. To reduce computational cost, we adopt a sliding window strategy, limiting the algorithm to a subset of $w$ Pauli strings in each iteration. Specifically, from the ungrouped Pauli strings, we first identify the initial pair of anticommuting Pauli strings and include them in the window, since both single- and two-qubit grouping require at least one anticommuting pair. We then select the remaining $w-2$ ungrouped Pauli strings from the beginning of the list to complete the window. We include a sensitivity study in Section~\ref{sec:sensi} to investigate the optimal number of $w$.

\subsection{Complexity Analysis} 
In this section, we analyze the complexity of NCF, starting with the grouping algorithm. Since the two-qubit grouping involves more steps and has a higher complexity, we focus on it. The process begins with generating the anticommuting and commuting graphs. In this step, we compare all pairs of $m$ Pauli strings, which results in $ \binom{m}{2} = \frac{m^2-m}{2} $ comparisons, each involving $q$ qubits. Thus, this step has a complexity of $O(\frac{(m^2-m)q}{2})$. Since each group contains at most 15 Pauli strings, the number of iterations is proportional to $m$. In each iteration, we first perform a Gaussian elimination with a complexity of $O(m^3)$. For each pair of anticommuting generator Pauli strings, we conduct a grading process. Given that the maximum number of generators is $2q$~\cite{aaronson2004improved}, there are at most $ \binom{2q}{2} = 2q^2-q $ such pairs. During the grading, we check if a pair can generate or is part of the generator set of a generated Pauli string. Since the number of generated Pauli strings is at most $m$, the complexity of the grading process is $O(m(2q^2 - q))$. After selecting the highest-point anti-commuting pair, we consider all possible pairs of candidate Pauli strings $P_c$ and $P_d$ from the remaining generators, which amounts to $\binom{2q}{2} = 2q^2 - q$ pairs. For each pair, we then determine how many Pauli strings they can generate among the at most $m$ candidates. Finally, we reorder each group so that it can be executed in parallel with the subsequent groups, which has a complexity of $O(m \log m)$. The total complexity of the grouping approach is therefore given by $O(\frac{(m^2-m)q}{2}+m(m^3+2mq^2-mq+2mq^2-mq +m\log m))\approx O(m^4+4m^2q^2)$. As discussed in Section~\ref{sec:window}, we use a sliding window strategy to limit the complexity by only considering the Pauli strings within a window of size $w$. This reduces the complexity of the grouping algorithm to $O(m(w^3 + 4wq^2))$, as the maximum number of Pauli strings is limited to $w$.

For Clifford circuit generation, we repeatedly generate single-qubit gates and CNOT gates for each group of Pauli strings, where at most $q$ single-qubit gates and $q$ CNOT gates are needed. During the CNOT gate generation, we compare all possible pairs of CNOT gates, which results in at most $\binom{q}{2} = \frac{q^2-q}{2}$ pairs, and select the ones that minimize the number of 1s in the tableau. Therefore, the complexity of this stage is $O(m(q+\frac{q^2-q}{2}))=O(\frac{mq^2+mq}{2})$.

\section{Evaluation}
\label{sec: evaluate}
\subsection{Experiment Setup}
\label{sec:setting}
\subsubsection{Benchmark} \label{sec:benchmark}
We perform Hamiltonian simulation for five molecules (LiH, H2O, N2, H2S, and CO2) using PySCF~\cite{sun2018pyscf}, where each molecular Hamiltonian involves a different number of qubits and Pauli strings. In addition, we generate the Pauli strings for both the Ising and Heisenberg models, which are widely used in physics research~\cite{georgescu2014quantum}. For these models, we consider two different lattice structures (2D and 3D) and two qubit counts (30 and 60). The detailed information, including the logical qubit count and the number of Pauli strings, are shown in Table~\ref{tab:benchmarks}.

\begin{table}[h!]
\centering
\caption{Qubit count and number of Pauli strings for the simulated Hamiltonians.}
\label{tab:benchmarks}
\begin{tabular}{|l|c|c|c|}
\hline
\textbf{Type} & \textbf{Structure} & \textbf{Qubit Count} & \textbf{Pauli Strings Count} \\
\hline
& LiH  & 12 & 630 \\
\cline{2-4}
& H2O & 14 & 1085 \\
\cline{2-4}
Molecule & N2 & 20 & 2950 \\
\cline{2-4}
 & H2S & 22 & 6245 \\
\cline{2-4}
& CO2 & 30 & 16121 \\
\hline
 & 2D & 30 & 79 \\
 \cline{2-4}
Ising & 2D & 60 & 164 \\
 \cline{2-4}
 & 3D & 30 & 89 \\
  \cline{2-4}
 & 3D & 60 & 193 \\
\hline
 & 2D & 30 & 147 \\
  \cline{2-4}
Heisenberg & 2D & 60 & 312 \\
 \cline{2-4}
 & 3D & 30 & 177 \\
  \cline{2-4}
 & 3D & 60 & 399 \\
\hline
\end{tabular}
\end{table}

\subsubsection{Baseline} \label{sec:baseline}
We use two baselines to demonstrate the effectiveness of NCF: (i) Gridsyn~\cite{ross2014optimal} and (ii) Rustiq+Trasyn~\cite{debrugière2024faster, hao2025reducing}. Gridsyn is a widely used synthesizer for RZ gates. As explained in Section~\ref{sec:motivation}, the number of RZ gates in a Hamiltonian simulation equals the number of Pauli strings. Therefore, we apply Gridsyn to synthesize each RZ gate. In contrast, Rustiq can merge certain RZ gates into U3 gates, after which we use Trasyn to synthesize the resulting unitaries.

\subsubsection{Metrics}
We use three metrics to evaluate NCF and the baselines: \textbf{T-gate count}, \textbf{T-gate depth}, and \textbf{Clifford count}. Although the implementation cost of Clifford gates is significantly lower than that of T gates, it remains non-negligible. Thus, we include the Clifford count as one metric.

\begin{figure*}
    \centering
   \includegraphics[width=1\linewidth]{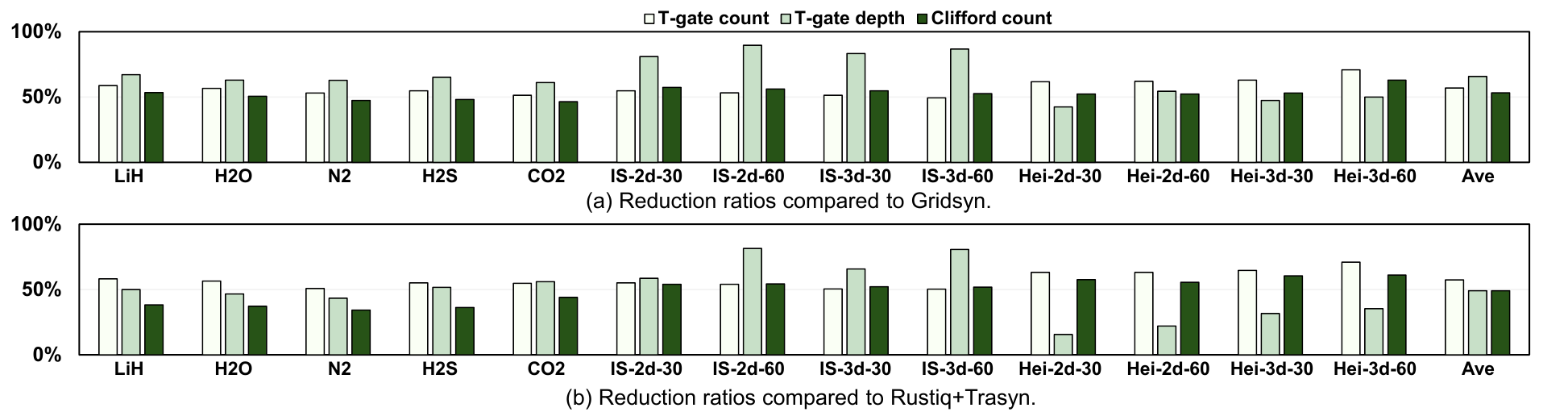}
     \caption{Reduction ratios achieved by single-qubit NCF.}
    \label{fig:single_main}
 \end{figure*}

   \begin{table*}[t]
\centering
\footnotesize
\setlength\tabcolsep{4.8pt}
\caption{Comparison of four methods across benchmarks in terms of T-gate count, T-gate depth, and Clifford count.}
\label{tab:two-qubit}
\begin{tabular}{|l||c|c|c||c|c|c||c|c|c||c|c|c|}
\hline
\multirow{2}{*}{\textbf{Benchmark}} & 
\multicolumn{3}{c||}{\textbf{Gridsyn}} & 
\multicolumn{3}{c||}{\textbf{Rustiq+Trasyn}} & 
\multicolumn{3}{c||}{\textbf{Single-qubit NCF + Trasyn}} & 
\multicolumn{3}{c|}{\textbf{Two-qubit NCF + Synthetiq}} \\ \cline{2-13}
 & \textbf{T-count} & \textbf{T-depth} & \textbf{Clifford} 
 & \textbf{T-count} & \textbf{T-depth} & \textbf{Clifford}
 & \textbf{T-count} & \textbf{T-depth} & \textbf{Clifford}
 & \textbf{T-count} & \textbf{T-depth} & \textbf{Clifford} \\ \hline

LiH & 1,419 & 1,352 & 12,774 & 1,397 & 757 & 8,890 & 818 & 716 & 5,767 & \textbf{49} & \textbf{42} & \textbf{5,289} \\ 
H2O & 3,746 & 3,415 & 27,017 & 3,662 & 2,412 & 14,111 & 1,262 & 1,035 & 14,434 & \textbf{375} & \textbf{303} & \textbf{11,691} \\ 
N2 & 7,537 & 7,431 & 82,196 & 7,138 & 4,237 & 5,2431 & 3,578 & 3,101 & 45,955 & \textbf{541} & \textbf{516} & \textbf{41,575} \\ 
Ising-2D-30 & 790 & 90 & 1,899 & 790 &  90 & 1,270 & 223 & 20 & 593 & \textbf{97} & \textbf{16} & \textbf{533} \\ 
Ising-2D-60 & 1,640 & 290 & 3,924 & 1,640 & 160 & 2,664 & 488 & 20 & 1,363 & \textbf{200} & \textbf{16} & \textbf{1,100} \\ 
Ising-3D-30 & 890 & 220 & 2,109 & 890 & 90 & 1,765 & 293 & 29 & 680 & \textbf{90} & \textbf{21} & \textbf{595} \\ 
Ising-2D-60 & 1,930 & 280 & 4,533 & 1,930 & 190 & 3,234 & 691 & 26 & 1,498 & \textbf{195} & \textbf{20} & \textbf{1,327} \\ 
Heisenberg-2D-30 & 2,352 & 864 & 4,570 & 2,352 & 556 & 4,531 & 521 & 87 & 1,222 & \textbf{224} & \textbf{52} & \textbf{1,117} \\ 
Heisenberg-2D-60 & 4,993 & 1,345 & 9,725 & 4,972 & 736 & 9,665 & 1,131 & 94 & 2,679 & \textbf{493} & \textbf{80} & \textbf{2,479} \\ 
Heisenberg-3D-30 & 2,832 & 1,008 & 5,831 & 2,832 & 720 & 5,465 & 652 & 108 & 1,523 & \textbf{253} & \textbf{68} & \textbf{1,378} \\ 
Heisenberg-3D-60 & 6,384 & 1,872 & 16,449 & 6,382 & 1,296 & 12,344 & 1,481 & 115 & 3,472 & \textbf{640} & \textbf{107} & \textbf{3,234} \\ 
\hline
\rowcolor{gray!30}
Average Reduction & \textbf{90.6\%} & \textbf{92.5\%} & \textbf{68.7\%} & \textbf{90.5\%} & \textbf{88.3\%} & \textbf{53.3\%} & \textbf{66.9\%} & \textbf{38.1\%} & \textbf{13.4\%} & & &  \\ 
\rowcolor{gray!30}
Ratios &  &  &  &  &  &  &  &  &  &  &  &  \\ 

\hline
\end{tabular}
\end{table*}

\subsubsection{Implementations}
\label{sec:implement}
In the baseline, we use an error threshold of $\epsilon = 0.001$ for Gridsyn, which has been shown to be sufficient for Hamiltonian Simulation~\cite{hao2025reducing, fomichev2025fast}. For the Rustiq+Trasyn baseline and for single-qubit NCF, we scale the error threshold for each synthesis to ensure that the total logical error rate remains consistent across different settings. Specifically, we set the adjusted threshold as $\epsilon=0.001\frac{Num\_Paulis}{Num\_unitaries}$, where $Num\_unitaries$ denotes the number of unitaries after applying Rustiq or NCF, and $Num\_Paulis$ is the original number of Pauli strings.

In the two-qubit grouping case, we employ Synthetiq as the two-qubit unitary synthesizer~\cite{paradis2024synthetiq}. We observe that Synthetiq can efficiently synthesize arbitrary two-qubit unitaries when the error threshold is above $\epsilon = 0.12$, whereas synthesizing a unitary with a lower threshold can take more than six hours. Accordingly, in our two-qubit grouping evaluation, we fix the error threshold for each Synthetiq synthesis at $\epsilon = 0.12$ and proportionally scale the error thresholds used in the baselines (Gridsyn and Rustiq+Trasyn). 

In terms of the NCF setting, we set the window size $w$ for the single-qubit case to 4 and the two-qubit case to 128. 
A more detailed analysis of different choices for $w$ will be presented in the sensitivity study in Section~\ref{sec:window}.

\subsection{Single-qubit Results} 
\label{sec:result_single}
In this section, we present the results of single-qubit NCF compared to the baselines. As shown in Fig.~\ref{fig:single_main}(a), NCF achieves average reductions of 57.0\%, 65.4\%, and 52.9\% in T-gate count, T-gate depth, and Clifford count, respectively, relative to Gridsyn. Compared to Rustiq+Trasyn, the reductions are 57.4\%, 49.1\%, and 49.0\% in T-gate count, T-gate depth, and Clifford count, respectively. NCF consistently outperforms both baselines across all 13 benchmarks listed in Table~\ref{tab:benchmarks}, achieving significant improvements in all three metrics.

As shown in the results, NCF achieves nearly a 60\% improvement in T-gate count compared to both baselines. This improvement arises because almost all Pauli strings can be partitioned into anticommuting groups, with each group containing at least two Pauli strings. Such grouping enables the merging of at least two unitaries into a single U3 unitary, effectively reducing the total number of unitaries by approximately 50\%. Moreover, the reduced number of unitaries allows Trasyn to operate with higher synthesis precision, which further decreases the number of synthesized T gates. We observe that the T-gate count reduction ratios are similar when compared to both baselines. Although Rustiq enables limited merging of unitaries, the number of such merges is negligible and does not significantly impact the overall reduction~\cite{debrugière2024faster}.  

In terms of T-gate depth, NCF achieves more than a 60\% reduction compared to Gridsyn. This improvement results from the anti-commuting grouping, commuting grouping, and the reordering strategy in NCF, which not only reduces the number of T gates but also increases their parallelism. When compared to Rustiq+Trasyn, the reduction is slightly below 50\%, as Rustiq optimizes the parallel execution of single-qubit unitaries~\cite{debrugière2024faster}.
Interestingly, NCF also achieves substantial improvements in Clifford count compared to both baselines. This arises from two main factors: (i) the number of synthesized Clifford gates is proportional to the number of synthesized T gates~\cite{hao2025reducing, ross2014optimal}, so reducing the T-gate count through unitary merging also lowers the Clifford count, and (ii) as discussed in \cite{debrugière2024faster}, simultaneous conjugation of multiple Pauli strings can further decrease the number of Clifford gates. For instance, the Clifford count is reduced from 24 to 20 when comparing the circuit in Fig.\ref{fig:NCF}(b) with Fig.~\ref{fig:NCF}(e). The reduction ratio against Rustiq+Trasyn, however, is smaller than that against Gridsyn, since Rustiq primarily targets reducing the CNOT gate count.

\subsection{Two-qubit Results} 
\label{sec:two-result}
As discussed in Section~\ref{sec:implement}, we fix the error threshold for two-qubit Synthetiq synthesis and scale the thresholds for both baselines accordingly. To determine the best compilation strategy, we also include single-qubit NCF for comparison in this section. For fairness, the error threshold for Trasyn synthesis in the single-qubit NCF is scaled in the same way as in the baselines. As a result, we evaluate four methods: two-qubit NCF, single-qubit NCF, Rustiq+Trasyn, and Gridsyn. The total error is kept constant across all methods, defined as the product of the per-synthesis error threshold and the number of unitaries. 
Due to the long execution time of Synthetiq, we exclude the H2S and CO2 benchmarks in this experiment. The T-gate count, T-gate depth, and Clifford count achieved by the four methods are summarized in Table~\ref{tab:two-qubit}.  We also provide the averaged reduction ratios achieved by the two-qubit NCF compared to the other three methods for each metric in Table~\ref{tab:two-qubit}. 

Based on Table~\ref{tab:two-qubit}, we have the following four observations: (i) As observed, the two-qubit NCF consistently achieves the best performance among the four methods, while the single-qubit NCF ranks second.  
(ii) Across different benchmarks, the reduction ratios of the two-qubit NCF are particularly higher for LiH, H2O, and N2. This is because the Pauli strings for these molecules are more complex than those in the Ising and Heisenberg models, allowing the two-qubit NCF to explore more grouping opportunities.
 (iii) Compared to Gridsyn, Rustiq+Trasyn achieves a similar T-gate count while providing improvements in T-gate depth and Clifford count across most benchmarks. These gains stem from Rustiq’s ability to merge a limited number of unitaries, enable parallel execution of RZ gates, and reduce Clifford gates such as CNOTs. This observation is consistent with the results presented in Section~\ref{sec:result_single}. (iv) Compared to single-qubit NCF, two-qubit NCF achieves a substantial reduction in T-gate count, while the improvements in T-gate depth and Clifford count are more modest. This is because multi-qubit grouping merges more unitaries into a single unitary, significantly lowering the number of T gates. However, a two-qubit group can include up to 15 Pauli strings (compared to 3 in the single-qubit case) and therefore acts non-trivially on more qubits, which reduces the potential for parallel execution and limits the reduction in T-gate depth. Since each group applies quantum gates to all qubits involved in its Pauli strings, the resulting unitaries are wider. For example, in Fig.~\ref{fig:NCF}(e), the two one-qubit groups act on two and three qubits, respectively, allowing them to be executed in parallel with other groups. In contrast, in Fig.~\ref{fig:NCF}(h), the two-qubit group acts on all four qubits, which prevents parallel execution with other groups. Moreover, the number of Clifford gates in the Clifford circuit remains similar between single-qubit and two-qubit grouping, resulting in comparable overall Clifford counts. For instance, the Clifford count in the Clifford circuit is similar between Fig.\ref{fig:NCF}(e) and Fig.\ref{fig:NCF}(h).   

Although two-qubit grouping demonstrates superior results compared to other methods, we believe single-qubit grouping remains the most practical strategy at the current stage. This is because existing two-qubit synthesizers are limited in scalability and only support relatively high error thresholds (e.g., $\epsilon = 0.12$ in Synthetiq), which are impractical for fault-tolerant quantum applications. In contrast, single-qubit synthesizers such as Trasyn can achieve much tighter error thresholds (below $\epsilon = 0.001$), enabling their use in practical implementations~\cite{hao2025reducing, fomichev2025fast}. However, with the development of more advanced multi-qubit synthesizers, two-qubit grouping has the potential to become the superior strategy, as it achieves the best performance across all metrics.

\begin{figure}
    \centering
   \includegraphics[width=1\linewidth]{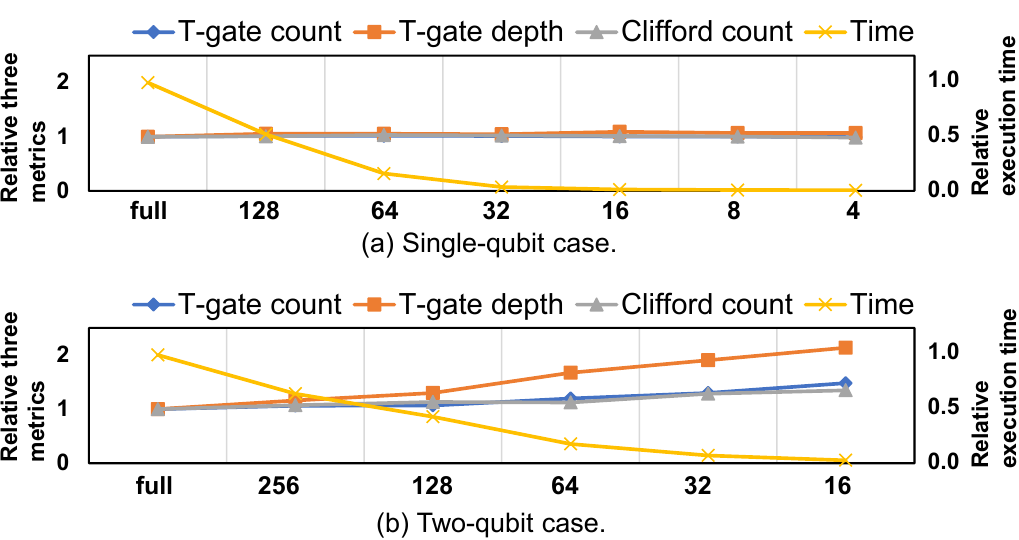}
     \caption{Relative values compared to the full window size.}
    \label{fig:sensitivity}
 \end{figure}

\subsection{Sensitivity Study of Window Size} 
\label{sec:sensi}
In this section, we conduct a sensitivity study to determine the optimal window size $w$ for both single-qubit and two-qubit NCF. For the single-qubit case, we evaluate seven window sizes: full size (equal to the total number of Pauli strings), 128, 64, 32, 16, 8, and 4. For the two-qubit case, we evaluate six window sizes: full size, 256, 128, 64, 32, and 16. The minimum window size is set to 4 and 16 for single- and two-qubit NCF, respectively, since these values are larger than the maximum possible group size (3 for single-qubit and 15 for two-qubit). Choosing a smaller window size would prevent NCF from grouping all eligible Pauli strings.

We selected four benchmarks for this experiment: LiH, H2O, Ising-2D-60, and Heisenberg-2D-60. Each benchmark is executed under the seven window sizes in the single-qubit case and the six window sizes in the two-qubit case. To ensure a fair comparison, we fix the total error rate for each benchmark as described in Section~\ref{sec:two-result}. For each window size, we measure the three metrics along with the compilation time and report the normalized values relative to the full-size window in Fig.~\ref{fig:sensitivity}.

As one can see, the three metrics remain nearly constant in the single-qubit case, while the compilation time decreases as the window size reduce. This is because most one-qubit groups contain only two generator Pauli strings without their generated Pauli string, since a generated Pauli string is typically produced by a larger number of generators. Consequently, increasing the window size does not capture additional Pauli strings within a group and therefore provides no further reduction in the three metrics. Based on these observations, we set the window size to 4 for the single-qubit case.

For the two-qubit case, we observe that all three metrics increase as the window size decreases. This is because a larger window provides NCF with a higher chance of locating the generated Pauli strings, thereby reducing the number of unitaries. To balance metric performance and compilation time, we set the window size to 128 for the two-qubit case.

\section{Related Works and discussion}
Several compilation frameworks have been proposed for Hamiltonian simulation~\cite{berg2020circuit, debrugière2024faster, li2022paulihedral, liu2025quclear}. However, these methods primarily focus on reducing the number of CNOT gates, which is not directly applicable to fault-tolerant quantum computers. To the best of our knowledge, this work is the first compilation framework specifically targeting Hamiltonian simulation for fault-tolerant quantum computers. Our framework can also be applied to future Hamiltonian simulations that utilize quantum phase estimation (QPE), in which each $RZ$ rotation gate is replaced by a controlled-$RZ$ gate~\cite{georgescu2014quantum, fomichev2025fast}. Since each controlled-$RZ$ gate can be transformed into a $ZZ$ gate~\cite{fomichev2025fast}, each Pauli string in the QPE-based Hamiltonian simulation acquires an additional $Z$ operator on the extra qubit, making it compatible with our framework.

Our framework relies on single-qubit U3 gate synthesizers and multi-qubit synthesizers to decompose the merged unitaries into sequences of Clifford and T gates. The development of unitary synthesizers is an active area of research, including dedicated RZ gate synthesizers~\cite{ross2014optimal, paetznick2013repeat}, U3 gate synthesizers~\cite{paetznick2013repeat, kliuchnikov2022shorter, hao2025reducing}, and multi-qubit synthesizers~\cite{paradis2024synthetiq, kliuchnikov2022shorter, gheorghiu2022t}. Although current state-of-the-art U3 synthesizers are limited to relatively high error thresholds (e.g., 0.001 for Trasyn) compared to RZ gate synthesizers~\cite{ross2014optimal, paetznick2013repeat, hao2025reducing}, and multi-qubit synthesizers are constrained by slower execution times~\cite{paradis2024synthetiq, gheorghiu2022t}, these methods have been shown to have the potential to achieve a similar T-gate count under the same error threshold~\cite{kliuchnikov2022shorter}. With advances in synthesizer methods, both U3 and multi-qubit synthesizers can achieve lower error thresholds and faster execution times, making our approach more practical for reducing T-gate count and depth in Hamiltonian simulation.


\section{Conclusion}
In this paper, we propose NCF, a compilation framework aimed at reducing both the T-gate count and T-gate depth for Hamiltonian simulation. For a list of Pauli strings in the Hamiltonian, NCF partitions them into groups and conjugates the Pauli strings within each group. After conjugation, the Pauli strings in each group act non-trivially on only one or two qubits, enabling the simultaneous synthesis of multiple $R_Z$ gates using a U3 or multi-qubit block synthesizer. Experimental results demonstrate that NCF achieves average reductions of 57.4\%, 49.1\%, and 49.0\% in T-gate count, T-gate depth, and Clifford count, respectively, compared to the state-of-the-art method.

\section*{Acknowledgment}

This material is based upon work supported by the U.S. Department of Energy, Office of Science, National Quantum Information Science Research Centers. This material is also based upon work supported by the DOE-SC Office of Advanced Scientific Computing Research MACH-Q project under contract number DE-AC02-06CH11357.



\bibliographystyle{IEEEtran}
\bibliography{refs}

\end{document}